\newtheorem{theorem}{Theorem}[section]
\newtheorem{proposition}[theorem]{Proposition}
\newtheorem{definition}[theorem]{Definition}
\newtheorem{remark}[theorem]{Remark}
\newcommand{\R}{\mathbb{R}}
\newcommand{\IR}{\mathbb{I}(\mathbb{R})}
\newcommand{\IRn}{\mathbb{I}(\mathbb{R}_{-})}
\newcommand{\IRp}{\mathbb{I}(\mathbb{R}_{+})}
\newcommand{\prt}[1]{\langle #1 \rangle}
\title{On Monotonicities of Interval Valued Functions}
\author{Ana Shirley Monteiro \and Regivan Santiago \and Martin Papco \and Radko Mesiar \and Humberto Bustince}
\date{}
\begin{document}
	
\maketitle 


\section{Introduction}

In this paper, we develop the notions of weak/directional monotonicity (developed by Sesma-Sara et al. \cite{MikelIVPreAgg2019} in terms of the \emph{Kulisch-Miranker order}) and the notion of $ G $-monotonicity (introduced by Santiago et al. \cite{SantiAbsHomo2021} for $ [0,\!1]$)
for intervals endowed with admissible orders.


\section{\label{sec:Prelim} Preliminaries}

The following notation will be used: $\mathbb{I}(\mathbb{R})$  denotes the set of all closed intervals with real numbers, uppercase letters denote its elements, $\IRn$ denotes the subset  consisting of all closed intervals with negative real numbers, $\IRp$ denotes the subset of closed intervals with positive real numbers. The set $\mathbb{I}([0,\!1])$  contains all closed subintervals of $[0,\!1]$. Further, we use uppercase letters to denote any interval variable $X$ and use bold fonts to denote degenerated intervals; $\vec{0}$ and $\vec{1}$ to denote the vectors $(0, \ldots, 0)$ and $(1, \ldots, 1)$, respectively; $\vec{\boldsymbol{0}}$ and $\vec{\boldsymbol{1}}$, in bold, to denote the vectors $(\boldsymbol{0}, \ldots, \boldsymbol{0})$ and $(\boldsymbol{1}, \ldots, \boldsymbol{1})$, respectively, where $\boldsymbol{0} = [0,0]$ and $\boldsymbol{1} = [1,1]$. The prefix ``\emph{IV}'' will mean ``\emph{interval-valued}'' in general.

\medskip

\begin{definition}[Operations over intervals]
\label{def_intoperations}
Let $X,Y\in \IR$ and $\alpha\in\R$.
\begin{enumerate}[leftmargin=*,label={\rm($\mathbf{OI}$-\arabic*)}]
		
\item The \emph{sum of $X$ and $Y$} is defined by $X + Y = [\underline{X} + \underline{Y}, \overline{X} + \overline{Y}]$.
		
\item The \emph{opposite of $X$} is defined by\  $-X = [-\overline{X}, - \underline{X}]$.
		
\item The \emph{difference of $X$ and $Y$} is defined by $X - Y = [\underline{X} - \overline{Y},  \overline{X} - \underline{Y}]$.
		
\item The \emph{product of $X,Y\in\IRp$} is defined by $X \cdot Y = [\underline{X} \cdot \underline{Y}, \overline{X} \cdot \overline{Y}]$.
		
\item The \emph{$\alpha$-multiple of $X$} is defined by $\alpha\cdot X=[\alpha \cdot \underline{X}, \alpha \cdot \overline{X}]$ if $\alpha \geq 0$ and  $\alpha\cdot X=[\alpha \cdot \overline{X}, \alpha \cdot \underline{X}]$ if $\alpha <0$.
\end{enumerate}
\end{definition}


\section{Monotonicities of IV-Functions}

In the case of intervals, considering the \textit{Kulisch-Miranker order} (${\preceq}_{KM}$) on $\mathbb{I}([0,\!1])$, Sesma-Sara et al. \cite{MikelIVPreAgg2019} provided the definitions of standard monotonicity, weak monotonicity, directional monotonicity and pre-aggregation functions to the interval context as follows. 

\begin{definition}
	\label{def_intmonotonicityKM}
	A function $F: \mathbb{I}([0,\!1])^n \to \mathbb{I}([0,\!1])$ is said to be increasing (resp. decreasing) if for all $X, Y \in \mathbb{I}([0,\!1])$ such that $X \  {\preceq}_{KM^n} \  Y$ it holds that $F(X) \ {\preceq}_{KM}  \ F(Y)$ (resp. $F(X) \  {\succeq}_{KM}  \ F(Y)$).
\end{definition}

\begin{definition} 
\label{def_intrmonotonicityKM}
Let $\overrightarrow{V} = \prt{(a_1,b_1), \ldots, (a_n,b_n)} \in (\mathbb{R}^2)^n$ such that $(a_i, b_i) \not= \vec{0}$ for some $i \in \{1, \ldots,n\}$.  A function $F: \mathbb{I}([0,\!1])^n \to \mathbb{I}([0,\!1])$ is said to be $\overrightarrow{V}$-increasing (resp. $\overrightarrow{V}$-decreasing) if for all $X\in \mathbb{I}([0,\!1])$ and $c > 0$ such that $X + c\overrightarrow{V} \in \mathbb{I}([0,\!1])^n$, it holds that $F(X) \ {\preceq}_{KM} \  F(X+c\overrightarrow{V})$ (resp. $F(X) \  {\succeq}_{KM} \  F(X+c\overrightarrow{V})$). In the case that $F$ is simultaneously $\overrightarrow{V}$-increasing and $\overrightarrow{V}$-decreasing, $F$ is said to be $\overrightarrow{V}$-constant.
\end{definition}

\begin{definition} 
	\label{def_intweakmonotonicityKM}
	Let $(a,b) \in \mathbb{R}^2$ with  $(a, b) \not= \vec{0}$ and $\overrightarrow{V} = \prt{(a, b), \ldots, (a,b)}$. A function $F: \mathbb{I}([0,\!1])^n \to \mathbb{I}([0,\!1])$
	is said to be $(a,b)$-weakly increasing (resp. $(a,b)$-weakly decreasing) if for all $x \in \mathbb{I}([0,\!1])^n$ and $c > 0$ such that $x + c\overrightarrow{V} \in \mathbb{I}([0,\!1])^n$, it holds that $F(X) \ {\preceq}_{KM} \  F(X+c\overrightarrow{V})$ (resp. $F(X) \ {\succeq}_{KM} \  F(X+c\overrightarrow{V})$).  
\end{definition}

The previous definitions use vectors of $(\mathbb{R}^2)^n$ and the Kulisch-Miranker order on $\mathbb{I}([0,\!1])$. In what follows we provide definitions which are based on admissible orders.

\color{black}

\begin{definition} 
	\label{def_intweakmonotonicity}
	An IV-function $F : \mathbb{I}([0,\!1])^n \to \mathbb{I}([0,\!1])$ is said to be \emph{weakly increasing} if 
	\begin{center}
		$F(X_1 + C, \ldots, X_n + C) \succeq F(X_1, \ldots, X_n)$,
	\end{center} 
	for all $X_1, \ldots, X_n, C \in \mathbb{I}([0,\!1])$ such that  $(X_1 + C, \ldots, X_n + C) \in \mathbb{I}([0,\!1])^n$. Dually we define a \emph{weakly decreasing} IV-function.
\end{definition}

In what follows we introduce the notions of $\overrightarrow{V}$\!-increasing and $\overrightarrow{V}$\!-decreasing  IV-functions. 

\vspace{0,1cm}

\begin{definition}
	\label{def_intrmonotonicity}
	Given $\overrightarrow{V} = (\boldsymbol{V_1}, \ldots, \boldsymbol{V_n}) \in \mathbb{I}(\mathbb{R})^n \setminus \{\vec{\boldsymbol{0}}\}$ such that each $\boldsymbol{V_i}$ is a degenerated interval for $i = 1, \ldots, n$,  an IV-function $F : \mathbb{I}([0,\!1])^n \to \mathbb{I}([0,\!1])$ is said to be  \emph{$\overrightarrow{V}$\!-increasing}, if 
	\begin{center}
		$F(X_1 + k \cdot \boldsymbol{V_1}, \ldots, X_n + k \cdot \boldsymbol{V_n}) \succeq F(X_1, \ldots, X_n)$, 
	\end{center}
	for all  $0 < k \in \mathbb{R}$ and $(X_1,  \ldots, X_n) \in \mathbb{I}([0,\!1])^n$ such that $(X_1 + k \cdot \boldsymbol{V_1}, \ldots, X_n + k \cdot \boldsymbol{V_n}) \in \mathbb{I}([0,\!1])^n$. Dually we define  \emph{$\overrightarrow{V}$\!-increasing} IV-function. In the case that $F$ is simultaneously $\overrightarrow{V}$\!-increasing and $\overrightarrow{V}$\!-decreasing, $F$ is said to be  a $\overrightarrow{V}$\!-constant IV-function.
\end{definition}

\begin{remark}
\label{remark_intweakdirmonotonicity} The weak increase (resp. the weak decrease) implies the $\vec{\boldsymbol{1}}$-increase (resp. the $\vec{\boldsymbol{1}}$-decrease)  of IV-functions.
\end{remark}

\begin{proposition}
\label{prop_directmonotImp}
Every IV-implication $I: \mathbb{I}([0,\!1])^2 \to \mathbb{I}([0,\!1])$ is a  $(\boldsymbol{-1}, \boldsymbol{1})$-increasing IV-function.
\end{proposition}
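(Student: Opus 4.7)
\smallskip

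\noindent\emph{Proof plan.} The plan is to unpack Definition~\ref{def_intrmonotonicity} for the specific vector $\overrightarrow{V}=(\boldsymbol{-1},\boldsymbol{1})$ and then use the two standard monotonicity axioms of an IV-implication (antitone in the first argument, isotone in the second) in a two-step chain.

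First I would fix $X_1,X_2\in\mathbb{I}([0,\!1])$ and $k>0$ such that $(X_1+k\cdot\boldsymbol{-1},\,X_2+k\cdot\boldsymbol{1})\in\mathbb{I}([0,\!1])^2$, which by ($\mathbf{OI}$-1) and ($\mathbf{OI}$-5) means $[\underline{X_1}-k,\overline{X_1}-k]\subseteq[0,\!1]$ and $[\underline{X_2}+k,\overline{X_2}+k]\subseteq[0,\!1]$. From these formulas the Kulisch--Miranker comparisons
\[
X_1+k\cdot\boldsymbol{-1}\;\preceq_{KM}\;X_1,\qquad X_2\;\preceq_{KM}\;X_2+k\cdot\boldsymbol{1}
\]
are immediate, since each endpoint is shifted down (resp.\ up) by $k>0$. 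Because any admissible order $\preceq$ refines ${\preceq}_{KM}$, the same two inequalities hold for $\preceq$.

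Next, invoking the defining monotonicities of an IV-implication (decreasing in the first variable, increasing in the second), I obtain the chain
\[
I(X_1,X_2)\;\preceq\;I(X_1+k\cdot\boldsymbol{-1},\,X_2)\;\preceq\;I(X_1+k\cdot\boldsymbol{-1},\,X_2+k\cdot\boldsymbol{1}),
\]
where the first inequality uses antitonicity in the first slot applied to $X_1+k\cdot\boldsymbol{-1}\preceq X_1$, and the second uses isotonicity in the second slot applied to $X_2\preceq X_2+k\cdot\boldsymbol{1}$. Reading the endpoints of the chain gives exactly the inequality required by Definition~\ref{def_intrmonotonicity}, so $I$ is $(\boldsymbol{-1},\boldsymbol{1})$-increasing.

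The only genuine obstacle is conceptual rather than technical: one must be confident that the ambient admissible order $\preceq$ inherits KM comparisons for the two simple shifts above, and that the axiomatic definition of IV-implication being used does provide coordinatewise monotonicity with respect to this same admissible order. Once those two points are granted, the proof reduces to the straightforward two-step argument above; no case analysis on the positions of $\underline{X_i},\overline{X_i}$ is needed, since the shifts $k\cdot\boldsymbol{-1}$ and $k\cdot\boldsymbol{1}$ are degenerate and therefore translate both endpoints by the same amount.
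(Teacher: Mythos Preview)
Your proposal is correct and follows essentially the same approach as the paper: a two-step chain using antitonicity in the first argument and isotonicity in the second argument of the IV-implication. The only cosmetic differences are that the paper applies the two monotonicity steps in the opposite order (its intermediate term is $I(X,\,Y+[k,k])$ rather than your $I(X_1+k\cdot\boldsymbol{-1},\,X_2)$) and that you make explicit the passage from ${\preceq}_{KM}$ to the admissible order, which the paper leaves implicit.
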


\begin{proof}
	Indeed, if $I: \mathbb{I}([0,\!1])^2 \to \mathbb{I}([0,\!1])$ is an IV-implication, then for every  $0 < k \in \mathbb{R}$ and $(X, Y) \in \mathbb{I}([0,\!1])^2$ such that $(X + k \cdot (\boldsymbol{-1}), \boldsymbol{y} + k \cdot \boldsymbol{1}) \in \mathbb{I}([0,\!1])^2$ we have: 
	\begin{center}
		$I(X + k \cdot (\boldsymbol{-1}), Y + k \cdot \boldsymbol{1}) = I(X - [k,k], Y + [k,k]) \succeq I(X, Y + [k,k]) \succeq I(X, Y)$,
	\end{center}
	since $ I $ is an IV-function  decreasing  in the first component and increasing in the second component.
\end{proof}

\section{$G\text{-}$ weak monotonicity}

Let us recall the notion of $g$-weakly increasing function on $U$ provided by Santiago et al. in \cite{SantiAbsHomo2021}:

\begin{definition}
\label{def_gweakmonfunction}
Let $g: [0,\!1]^2 \to [0,\!1]$ be a function such that $ g(x,y) \geq y $, for all $ x,y \in [0,\!1]$. A partial function $f: [0,\!1]^n \to [0,\!1]$ is said to be $g$-weakly increasing if $f(g(\lambda,x_1), \ldots, g(\lambda,x_n)) \geq f(x_1, \ldots, x_n)$, for all $(x_1, \ldots, x_n) \in [0,\!1]^n$ and $\lambda \in (0,\!1]$. Dually we define a $g$-weakly decreasing function.
\end{definition}

Next we provide the notion of $ G $-weak monotonicity to the interval context considering on $\mathbb{I}([0,\!1])$ admissible orders.

\begin{definition} 
\label{def_Gweakmonotonicity}
Let $G : \mathbb{I}([0,\!1])^2 \to \mathbb{I}([0,\!1])$ be an IV-function such that $G(X,Y) \succeq Y $ for all $X,Y \in \mathbb{I}([0,\!1])$. An IV-function $F: \mathbb{I}([0,\!1])^n \to \mathbb{I}([0,\!1])$ is said to be  \emph{$G$-weakly increasing}, if \begin{center}
$F\left(G(\Lambda,X_1), \ldots, G(\Lambda,X_n)\right) \succeq F(X_1, \ldots, X_n) $,
\end{center} 
for all $\Lambda \in \mathbb{I}([0,\!1])$ and $(X_1, \ldots, X_n) \in \mathbb{I}([0,\!1])^n$. Dually we define a \emph{$G$-weakly decreasing} IV-function.
\end{definition}

\begin{proposition} 
\label{prop_intGweakmonotonicity}
Every increasing (resp. decreasing) IV-function $F : \mathbb{I}([0,\!1])^n \to \mathbb{I}([0,\!1])$ is a  $G$-weakly increasing (resp. $G$-weakly decreasing) IV-function for every IV-function $G : \mathbb{I}([0,\!1])^2 \to \mathbb{I}([0,\!1])$ such that $G(X,Y) \succeq Y$.
\end{proposition}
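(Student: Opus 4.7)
The plan is to reduce the statement to a direct application of two facts: the hypothesis on $G$ and the monotonicity of $F$. Since there is essentially no obstruction here, the proof will be short; the only care needed is to unpack the definitions under the admissible order $\preceq$.

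First I would fix arbitrary $\Lambda \in \mathbb{I}([0,1])$ and $(X_1,\ldots,X_n) \in \mathbb{I}([0,1])^n$. Applying the hypothesis $G(X,Y) \succeq Y$ pointwise with $X = \Lambda$ and $Y = X_i$, I obtain $G(\Lambda, X_i) \succeq X_i$ for each $i \in \{1,\ldots,n\}$. This is exactly the componentwise inequality
\[
(X_1,\ldots,X_n) \preceq \bigl(G(\Lambda,X_1),\ldots,G(\Lambda,X_n)\bigr)
\]
in the product order on $\mathbb{I}([0,1])^n$ induced by $\preceq$.

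Next I would invoke the assumption that $F$ is increasing (Definition \ref{def_intmonotonicityKM}, read with respect to the admissible order $\preceq$ rather than $\preceq_{KM}$, as indicated at the start of the section) to conclude
\[
F(X_1,\ldots,X_n) \preceq F\bigl(G(\Lambda,X_1),\ldots,G(\Lambda,X_n)\bigr),
\]
which is precisely the condition of Definition \ref{def_Gweakmonotonicity} for $G$-weak increase. Since $\Lambda$ and $(X_1,\ldots,X_n)$ were arbitrary, this shows $F$ is $G$-weakly increasing.

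Finally, the decreasing case is entirely dual: from $G(\Lambda,X_i) \succeq X_i$ and the assumption that $F$ reverses the order, one obtains $F(G(\Lambda,X_1),\ldots,G(\Lambda,X_n)) \preceq F(X_1,\ldots,X_n)$, establishing that $F$ is $G$-weakly decreasing. No step here is delicate; the proposition is really a structural observation that $G$-weak monotonicity is a genuine weakening of ordinary monotonicity whenever $G$ dominates its second argument.
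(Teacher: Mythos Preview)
Your argument is correct and is exactly the direct unpacking of definitions that the paper has in mind; the paper's own proof is simply ``Straightforward.'' Your write-up just makes explicit the two steps (componentwise $G(\Lambda,X_i)\succeq X_i$, then monotonicity of $F$) that the paper leaves to the reader.
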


\begin{proof}
Straightforward.
\end{proof}



\end{document}